\def\NAT@def@citea{\def\@citea{\NAT@separator}}
\theoremstyle{plain}
\newtheorem{theorem}{Theorem}[section]
\newtheorem{problem}[theorem]{Problem}
\theoremstyle{definition}
\theoremstyle{remark}
\begin{document}


\title{A Geo-Aware Server Assignment Problem for Mobile Edge Computing}

\author{
\name{Duc A. Tran\thanks{CONTACT Duc A. Tran. Email: duc.tran@umb.edu} and Quynh Vo}
\affil{Department of Computer Science, University of Massachusetts, Boston, MA 02125}
}

\maketitle

\begin{abstract}
As mobile devices have become the preferred tool for communication, work, and entertainment,  traffic at the edge of the network is growing more rapidly than ever. To improve user experience, commodity servers are deployed in the edge to form a decentralized network of mini datacenters each serving a localized region. A challenge  is how to place these servers geographically to maximize the offloading benefit  and be close to the users they respectively serve.  We introduce a formulation for this problem to serve applications that involve pairwise communication between mobile devices at different geolocations. We explore several heuristic solutions and compare them in an evaluation using both real-world and synthetic datasets. 
\end{abstract}

\begin{keywords}
Mobile edge computing;  optimization; heuristic; server assignment; geo-aware; cloud computing
\end{keywords}

\section{Introduction \label{sec:intro}}
Mobile Edge Computing (MEC) \cite{MECWhitePaper} has emerged as a viable technology for mobile operators to push computing resources closer to the users so that requests  can be served locally without long-haul crossing of the network core, thus improving network efficiency and user experience. In a nutshell, a typical MEC architecture consists of four layers of entities: the mobile users, the base-stations, the edge servers, and the cloud datacenter. The edge servers are introduced in the edge of the network connecting the base-stations to the network core, each server being an aggregation hub, or a mini datacenter to offload processing tasks from the remote datacenter. Because the region, or ``cloudlet" \cite{5280678}, served by an edge server is much smaller, a commodity virtualization-enabled computer can be used to run compute and network functions that would otherwise be provided by the datacenter. 

MEC can benefit many compute-hungry or latency-critical applications involving video optimization \cite{8010284}, content delivery \cite{7979975}, big data analytics \cite{7994559}, and augmented reality \cite{7906521}, to name a few.  Originally initiated for cellular networks to realize the 5G vision, MEC has been generalized for broader wireless and mobile networks \cite{8016573}. It is  becoming more of a phenomenon with the Internet of Things; more than 5 billion IoT devices would be connected to MEC  by 2020 according to a January 2017 forecast by BI Intelligence \cite{BIIntelligenceIoT}. 

 A challenge with MEC is how to align the edge servers with the base-stations geographically to maximize the edge computing benefits. To address this challenge is application-specific. We focus on applications involving pairwise transactions between devices. Cellphone calls made from one user to another, peer-to-peer video streaming, and multi-player online gaming are examples of this type of communication. If the two devices are served by different edge servers, the datacenter must get involved, thus incurring a backhaul  cost. This cost is avoided if the same  server serves both  devices. However, it makes no practical sense if they are located in far remote geographic locations because a server should be  geographically close to where it serves to avoid high installation cost and long latency \cite{Bouet:2017:GMR:3098208.3098216,7367390}. On the other hand, those devices with many transactions should   belong to the same  server. Although geographic proximity tends to imply high transactional activity, this relationship is not straightforward. As the number of servers is finite and their capacities limited, it is impossible to equally please all the users.

Therefore, we are motivated to solve the following optimization problem: where to place a set of edge servers of limited capacity and assign them to  the base-stations such that (1) offloading benefit is maximized and (2) each server is geographically close to its respective users. We require that the server locations be chosen from a set of predetermined geographic sites; this constraint applies to practical cases involving non-technical factors  in the deployment of the servers, for example, due to economics, policies, or management. 

The server assignment problem is not new outside MEC. Indeed, it belongs to the body of work on distributed allocation of resources (virtual machines) widely studied in the area of cloud computing \cite{6847919,6195847,Mann:2015:AVM:2808687.2797211,DENG2017107}.  The MEC problem is  similar, however with unique constraints. Firstly, the servers in MEC need be near the user side, not  the datacenter side, and so the communication cost to be minimized is due to the use of the backhaul network (towards the datacenter), not  the front-haul (towards the cells). Secondly, the geographic spread of the cells served by a MEC server should be a design factor, which is not a typical priority for a  distributed cloud solution. 

The MEC server assignment problem has been addressed in some forms, only recently  \cite{Ceselli:2017:MEC:3148626.3148659,Bouet:2017:GMR:3098208.3098216}. Our key contribution is a new practical formulation for the server assignment problem. We prove its NP-hardness  and subsequently explore an approximation solution based on local search heuristics. We evaluate its effectiveness and efficiency using both real-world and synthetic datasets and comparing to intuitive approaches.

The remainder of the paper is organized as follows. Related work is discussed in Section \ref{sec:relatedwork}. The problem  is stated and formulated in Section \ref{sec:prob}. The algorithm is proposed in Section \ref{sec:solution}. The results of the evaluation study are analyzed in Section \ref{sec:evaluation}.  The paper concludes in Section \ref{sec:conclusions} with pointers to our future work.

\section{Related Work\label{sec:relatedwork}}
Every finite computing system serving a large number of resource-hungry requests faces the challenge of how to assign resources to computing units to optimize hardware consumption and best satisfy application QoS requirements. The MEC server assignment problem shares the same challenge, which can arise in various scenarios. 

The assignment problem in \cite{8071527} applies to a MEC network supporting multiple applications of known request load and latency expectation  and the challenge is  to determine which edge servers to run the required virtual machines (VMs), constrained by server capacity and inter-server communication delay. In \cite{DBLP:conf/edge/BahreiniG17,Wang:2017:10.1109/ACCESS.2017.2665971}, where only one application is being considered and consists of multiple inter-related components organizable into a graph, the challenge is  how to place this component graph on top of the physical graph of edge servers to minimize the cost to run the application. In the case that edge servers must be bound to certain geographic locations, a challenge is to decide which among these locations  we should place the servers and   inter-connect them for optimal routing and installation costs \cite{Ceselli:2017:MEC:3148626.3148659}.

The above works do not take into account the geographic spread of the region served by a server. The cells served by the same server can be highly scattered geographically,  causing long latency and high management cost. This motivates the work in \cite{Bouet:2017:GMR:3098208.3098216}  proposing a spatial partitioning of the geographic area such that  the cells of the same server are always contiguous. For this partitioning, a graph-based clustering algorithm is proposed that repeatedly merges adjacent cells to form clusters as long as the merger results in better offloading and no cluster exceeds the server capacity. In a similar research \cite{7367390}, where the cells served by each server are also contiguous, the objectives are to minimize the server deployment cost, the front-haul link cost for each server to reach its assigned base-stations, and the cell-to-cell latency  via the edge; the proposed algorithm is to repeatedly select the next remaining server of the least deployment cost and assign it to all the nearby base-stations of the least front-haul link cost so long as the server capacity is met. 

The problem in \cite{Bouet:2017:GMR:3098208.3098216}  is aimed to optimize for workloads involving cell-to-cell communication, whereas the problem in  \cite{7367390}  is for individual-cell workloads.  The latter also  requires that all the processing be fulfilled by the edge servers, hence zero backhaul cost. In this aspect, our work is  more similar to \cite{Bouet:2017:GMR:3098208.3098216} because we also optimize for cell-to-cell workloads and cannot avoid backhaul use (thus, the objective to minimize its cost). 
However, there are   key differences. 
First, the number of servers is a constraint in our problem, but not in \cite{Bouet:2017:GMR:3098208.3098216}.  Second, we   minimize the geographic spread of the cells served by each server, instead of enforcing their geographic contiguity. We argue that these cells should not be too far from their server but do not have to be contiguous; in contrast, the solution in \cite{Bouet:2017:GMR:3098208.3098216} enforces contiguity, but has no control of spread. Third,  we require that the servers be bound to predetermined locations (as in \cite{7367390,Ceselli:2017:MEC:3148626.3148659}, but not a constraint of \cite{Bouet:2017:GMR:3098208.3098216}).

\section{Problem Statement\label{sec:prob}}
 The  geographic area  $\mathcal{A}$ is partitioned into a set $\mathcal{C}$ of $N_\mathcal{C}$  cells, each cell $i\in \mathcal{C}$ served by a base-station at a known location in $\mathcal{A}$; we also refer to this base-station as base-station $i$. The meanings of ``base-station" and ``cell" are general, not necessarily understood in conventional meaning as in cellular networks; for example, as a Wi-Fi access router and its coverage area in Wi-Fi networks.
The edge layer consists of $N_\mathcal{S} < N_\mathcal{C}$ edge servers, whose  locations  are chosen from a set $\mathcal{L}  \subset \mathcal{A}$ of $N_\mathcal{L}   \ge N_\mathcal{S}$ candidate locations. For example,   $\mathcal{L}$ can be a subset of  base-station locations; in this case,  an edge server must be co-located with some base-station (as in \cite{7367390}). In general, we admit any arbitrary candidate location in $\mathcal{A}$.

The  input workload  is a symmetric $N_\mathcal{C} \times N_\mathcal{C}$ matrix of non-negative real values $w_{ij}$ representing the transaction demand between users in cell $i \in \mathcal{C}$ with users in cell $j \in \mathcal{C}$. Note that $w_{ii}$ is the transaction demand between users of the same cell $i$. Denote by $w_i = \sum_{j \in C} w_{ij}$ the total workload involving cell $i$. Without loss of generality, assume that the total workload involving all the cells equals 1; i.e., $\sum_{i \le j \in \mathcal{C}} w_{ij} =1$.     Each server exclusively manages the workload for a group of base-stations. If  base-station $i$ is assigned to a server at location $l$, we have a front-haul link, whose usage cost should increase with their distance; denote this cost by $d_{il}$ (assumed given, e.g., equal to distance).  Because the backhaul links to reach the remote data center are much more expensive, representing the worst-case scenario, we assume their cost to be a fixed cost $d$ much higher than all front-haul link costs. We want to avoid  backhaul links as much as possible.

Our goal is to (1) server location assignment (SLA): assign the best location for each server and (2) cell-server assignment (CSA): assign the best server for each cell. We use 0-1 integer programming to formulate these assignments. Define a binary variable $z = (z_{il})_{\mathcal{C}\times \mathcal{L}}$ such that $z_{il}=1$ iff there is a server at location $l$ and this server serves cell $i$. 
Given $z$, we can tell exactly the server locations (SLA) and the cell-to-server assignment (CSA). A location $l$ is a server location iff $\sum_{i \in \mathcal{C}} z_{il} \ge 1$,
i.e., at least one cell is assigned to location $l$. Another way to express this condition is
\[
\prod_{i \in \mathcal{C}} (1-z_{il}) = 0.
\]
Because   $N_\mathcal{S}$ different locations must be chosen for the servers, we have the constraint below,
\[
\sum_{l \in \mathcal{L}}\bigg (1- \prod_{i \in \mathcal{C}} (1-z_{il}) \bigg) = N_\mathcal{S}.
\]
Also, each cell must   be assigned to exactly one server, another constraint is
\[
\sum_{l \in \mathcal{L}} z_{il} = 1 ~\forall i \in \mathcal{C}.
\]
We assume 
that there is a capacity $W<1$ on  the compute load a server can process. In the case a server is fully saturated, the residual workload must be serviced by the data center.  Our objectives are   to minimize the backhaul cost and geo-spread under this    assumption. 

\subsection{Backhaul Cost}
A transaction can be one of the following types: between users of the same cell, between users of two different cells assigned to the same server, and between users of two different cells assigned to different servers.
The compute demand for the edge comes from transactions of the  first two types. Specifically, if a server is placed at location $l$, we represent its compute (demand) load as 
\begin{align}
W(l) &= \sum_{i \le j  \in \mathcal{C}} z_{il} z_{jl} w_{ij}.
\label{eq:frontload}
\end{align}
Of course, $W(l)=0$ for every non-server location $l$.

If the server capacity is infinite, all of this load will be fulfilled by the server. However, limited by the server capacity $W$, if $W(l) > W$, the remaining amount ($W(l)-W$) of workload must be processed at the datacenter, thus incurring a backhaul cost.
Consequently, the total backhaul cost is due to not only the transactions between cells assigned to different servers, but also those transactions assigned to the same server that exceed its capacity. We represent the backhaul cost as
\begin{align}
COST = \sum_{i\le j \in \mathcal{C}} w_{ij} \underbrace{\prod_{l \in \mathcal{L}} (1-z_{il}z_{jl})}_{ i,j ~ not~ assigned~ to ~ same ~server} 
+ \sum_{l \in \mathcal{L}} \underbrace{\max \bigg(0,  \sum_{i \le j  \in \mathcal{C}} z_{il} z_{jl} w_{ij}-W \bigg)}_{overload ~amount ~ of ~ server ~ at ~l},
\label{eq:o_cost}
\end{align}
which can also be expressed as
\begin{align}
COST &= \bigg(1 - \sum_{s \in \mathcal{S}}  \sum_{i\le j \in \mathcal{C}} w_{ij}x_{is}x_{js} \bigg) 
+ \sum_{s \in \mathcal{S}} \max \bigg(0,  \sum_{i \le j  \in \mathcal{C}} x_{is} x_{js} w_{ij}-W \bigg)\nonumber\\
=& 1 + \sum_{s \in \mathcal{S}} \max \bigg(-W, -\sum_{i\le j \in \mathcal{C}} w_{ij}x_{is}x_{js}\bigg)\nonumber\\
=& 1 - \sum_{s \in \mathcal{S}} \min \bigg(W, \sum_{i\le j \in \mathcal{C}} w_{ij}x_{is}x_{js}\bigg).
\label{eq:o_cost1}
\end{align}
We want to minimize $COST$.

\subsection{Geographic Spread}
For better latency and easier management, we should keep the geographic region served by an edge server from spreading too far, especially for cells with many transactions. We quantify the geo-spread of a server as the sum of its distance to each assigned base-station, weighted by transaction demand. If this server is placed at location $l$, its geo-spread is quantified as
\begin{align}
S(l) = \sum_{i \in \mathcal{C}} z_{il} d_{il}  w_{i}.
\end{align}
We want to minimize the total geo-spread for all the  servers 
\begin{align}
   SPREAD 
= \sum_{l \in \mathcal{L}} S(l) = \sum_{l \in \mathcal{L}}\sum_{i \in \mathcal{C}} z_{il} d_{il}  w_{i}   .
\label{eq:o_spread}
\end{align} 
\subsection{Optimization Problem and NP-Hardness}
In summary, our problem is the following two-objective optimization problem.
\begin{problem}[Min Cost-Spread Assignment (MCSA)]
\begin{align}
\min ~ &\{COST, SPREAD \} \nonumber\\
\text{s.t.}~~~& \nonumber \\
~~~&1)~\sum_{l \in \mathcal{L}}\bigg (1- \prod_{i \in \mathcal{C}} (1-z_{il}) \bigg) = N_\mathcal{S}.\label{c00}\\
~~~&2)~\sum_{l \in \mathcal{L}} z_{il} = 1 ~\forall i \in \mathcal{C} \label{c01}\\
~~~&3)~z_{il} \in \{0,1\} ~\forall i\in \mathcal{C},l\in \mathcal{L}.
\end{align}
\label{ref:problem}
\end{problem}
\begin{theorem}
  MCSA     is NP-hard.
\end{theorem}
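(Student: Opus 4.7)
The plan is to reduce the classical (metric) $k$-median problem, which is NP-hard, to MCSA. Recall that $k$-median asks, given clients $V$, candidate facility sites $F$, nonnegative distances $c_{ij}$, and an integer $k$, to open $k$ facilities in $F$ and assign each client to one so as to minimize total client-to-facility distance. The SPREAD objective \eqref{eq:o_spread} already has the form of a weighted $k$-median cost, so the bulk of the work is engineering the workload matrix so that the COST objective is trivially minimized.

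Given a $k$-median instance $(V,F,c,k)$ with $k\ge 2$, I would build an MCSA instance by setting $\mathcal{C}=V$, $\mathcal{L}=F$, $N_{\mathcal{S}}=k$, and $d_{il}=c_{il}$. For the workload I would use a purely diagonal matrix with $w_{ii}=1/|V|$ and $w_{ij}=0$ for $i\neq j$; this satisfies the normalization $\sum_{i\le j}w_{ij}=1$ and gives $w_i = 1/|V|$ for every cell. For capacity I would take $W=(|V|-1)/|V|$, which lies in $(0,1)$ as required.

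Next I would verify that COST vanishes on every feasible solution. By constraint \eqref{c00} exactly $k\ge 2$ server locations are used, and by \eqref{c01} each cell is assigned to a unique server, so no server receives more than $|V|-k+1\le |V|-1$ cells and hence has compute load at most $(|V|-1)/|V|=W$. Combined with $w_{ij}=0$ for $i\neq j$, the rewriting \eqref{eq:o_cost1} collapses to $COST=0$ on every feasible solution. Under the same construction, $SPREAD = (1/|V|)\sum_{i\in V} c_{i,\sigma(i)}$, where $\sigma(i)$ denotes the server to which cell $i$ is assigned, which is exactly the $k$-median cost up to the constant factor $1/|V|$. Hence any algorithm solving MCSA, under any reasonable scalarization of the bi-objective (Pareto, weighted sum, or lexicographic), yields an optimal $k$-median solution.

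The main obstacle is reconciling the bi-objective nature of MCSA with a single-objective reduction: a priori one might trade SPREAD against COST in the constructed instance, blunting the argument. The fix is the design choice above of forcing COST to be identically zero on all feasible solutions so that SPREAD alone determines optimality. A secondary subtlety is respecting the strict inequality $W<1$ while keeping the capacity constraint nonbinding, which is why I pick $W=(|V|-1)/|V|$ rather than $W=1$. With these choices in place, the reduction is essentially a rewriting of $k$-median in MCSA notation, and NP-hardness of MCSA follows from the NP-hardness of $k$-median.
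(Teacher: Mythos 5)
Your reduction is correct in outline but takes a genuinely different route from the paper's. The paper neutralizes $SPREAD$ (diagonal workload \emph{and} all $d_{il}=1$, so $SPREAD$ is constant over the feasible set) and shows that the remaining $COST$ minimization encodes the optimization version of PARTITION with $N_\mathcal{S}=2$ servers and $W$ equal to half the total load; the hardness there comes from the capacity/load-splitting side. You instead neutralize $COST$ (diagonal workload plus $W=(|V|-1)/|V|$, which no server can exceed once all $N_\mathcal{S}\ge 2$ servers are nonempty) and let $SPREAD$ carry the hardness via $k$-median, which the paper itself only brings in later for the spread-only heuristic. Both proofs handle the bi-objective issue the same way, by making one objective constant so the other alone determines optimality; yours leans on the known NP-hardness of $k$-median \cite{doi:10.1137/0137041}, while the paper's is self-contained via PARTITION and shows that hardness persists even when geography is trivial.

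There is one gap to close. Constraint (\ref{c00}) forces each of the $N_\mathcal{S}$ chosen locations to serve at least one cell. You use this correctly to cap every server's load at $(|V|-1)/|V|=W$, but the same constraint means that minimizing $SPREAD$ over MCSA-feasible assignments is \emph{not} literally $k$-median: $k$-median permits an open facility to serve no client, whereas your MCSA instance does not. In degenerate instances the optima differ --- e.g., three clients at one point, candidate facilities at that point and at a distant one, $k=2$: the $k$-median optimum is $0$, but any MCSA-feasible assignment must route some cell to the far location, so $SPREAD>0$. Hence the claim that any MCSA solver ``yields an optimal $k$-median solution'' does not follow as stated. The fix is cheap: reduce from the graph $p$-median instances of Kariv and Hakimi \cite{doi:10.1137/0137041}, in which facility sites coincide with the client vertices and distinct vertices are at strictly positive distance. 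Then for any open set $S$ the nearest-facility assignment gives each facility of $S$ its own co-located client at distance zero, so the ``every server nonempty'' requirement costs nothing, the constrained optimum equals the $k$-median optimum, and the rest of your argument goes through verbatim.
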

\begin{proof}
Consider a simple configuration of our problem: $w_{ij}=0$ for all $(i,j) \in \mathcal{C}\times \mathcal{C}$ except for $i=j$, and $d_{il}=1$ for all $(i,l) \in \mathcal{C}\times \mathcal{L}$. Then, it is easy to see that 
 \begin{align*}
COST&= \sum_{l \in \mathcal{L}}  \max  (0,  \sum_{i    \in \mathcal{C}} z_{il}  w_{ii}-W )\\
SPREAD&=\sum_{l \in \mathcal{L}}\sum_{i \in \mathcal{C}} z_{il} w_{ii} 
= \sum_{i \in \mathcal{C}} w_{ii}\sum_{l \in \mathcal{L}}z_{il} =  \sum_{i \in \mathcal{C}} w_{ii}.
\end{align*}
Because $SPREAD$ is a constant, we can choose any $N_\mathcal{S}$ locations to place the servers. Given these server locations, what remains is to minimize $COST$,
\[
\min \bigg \{ COST = \sum_{l \in \mathcal{L}}  \max  (0,  \sum_{i    \in \mathcal{C}} z_{il}  w_{ii}-W ) \bigg\}
\]

This minimization is NP-hard because we show below that an algorithm for it can be used to solve the optimization version of the partition problem, which is known to be NP-hard: partition a given set of positive integers, $\{x_1, x_2, ..., x_n\}$, into two subsets such that the respective subset sums differ the least. To reduce to our problem, consider $N_\mathcal{S}=2$ servers  and  $N_\mathcal{C}=n$ cells $\{1,2,...,n\}$  with $w_{ii} = x_i$, and let $W= \frac{1}{2}\sum_{i=1}^n w_{ii}$. Suppose that an optimal $COST$ solution, $(A,B)$, assigns a cellset $A \subset [n]$   to  server 1 and a cellset $B \subset [n]$  to server 2. Without loss of generality, let $\sum_{i \in A} w_{ii} \le \sum_{i \in B} w_{ii}$. The corresponding $COST$ is
\begin{align*}
COST(A,B) 
= \max(0, \sum_{i \in A} w_{ii}-W) + \max(0, \sum_{i \in B} w_{ii}-W)
=    \sum_{i \in B} w_{ii}-W.
\end{align*}
There are two cases. First, if  $\sum_{i \in A} w_{ii} = \sum_{i \in B} w_{ii} = W$, then partition $A \cup B$ is optimal for the partition problem because the subset sums are identical. Second, in the otherwise case,  $\sum_{i \in A} w_{ii} < W < \sum_{i \in B} w_{ii} $, partition $A \cup B$ is optimal for the partition problem because  no partition $A' \cup B'$ can offer a smaller subset sum difference,
\[
 \sum_{i \in B'} w_{ii} - \sum_{i \in A'} w_{ii}  <  \sum_{i \in B} w_{ii} - \sum_{i \in A} w_{ii}. 
 \]
 Indeed,  suppose by contradiction that this partition $(A', B')$ exists. Without loss of generality, let $\sum_{i \in B'} w_{ii} \ge \sum_{i \in A'} w_{ii}$; hence, we must have $\sum_{i \in B} w_{ii} > \sum_{i \in B'} w_{ii} \ge W \ge \sum_{i \in A'} w_{ii}$. Then, if we assign $A'$ to server 1 and $B'$ to server 2, $COST$ will be
\begin{align*}
COST(A',B') &= \max(0, \sum_{i \in A'} w_{ii}-W) + \max(0, \sum_{i \in B'} w_{ii}-W)\\
&= 0 +  \sum_{i \in B'} w_{ii}-W<\sum_{i \in B} w_{ii} -W\\
&= COST(A,B).
\end{align*}
This is contradictory to the assumption that $(A,B)$ is the optimal $COST$ solution.
\end{proof}

\section{Heuristic Approach\label{sec:solution}}
We propose a three-phase algorithm approach: focus on spread optimization first, then refine the solution based on the cost, which as a side effect may worsen the spread, and, finally, improve the solution again, this time for a better spread.  As local search is widely used for hard combinatorial optimization problems, we present below the local search methods to optimize each individual objective and  how to apply them in the three-phase approach.

\subsection{Cost-Only Optimization}
Because $COST$ does not involve geography, we   need   compute only the best cell-server assignment based on the  workload demand; any location choice for the servers would work. In a nutshell,  our algorithm starts with a random assignment (random cell-server assignment and random server-location assignment), and repeatedly apply a local operation such that the new assignment improves $COST$. A local operation, denoted by $\mathsf{move\_cells}(l_0, l_1)$, runs an algorithm to migrate cells  between a pair of  servers at locations $l_0$ and $l_1$; the servers are referred to as the $l_0$-server and $l_1$-server, respectively. As long as we can find a local operation that improves $COST$, 
\begin{align}
COST_{new} < COST,
\label{eq:movecells1}
\end{align}
we make the new assignment permanent and repeat the same process until no such local operation is found. 

Let $A_0 (A_0')$ and $A_1 (A_1')$ denote the cellsets of the $l_0$-server and $l_1$-server before (after) the location, respectively. According to Eq. (\ref{eq:o_cost}), $COST$ will decrease if the   quantity
\begin{align}
\Delta_{COST}(A_0, A_1) = \sum_{i \in \mathcal{A}_0}\sum_{j \in \mathcal{A}_1} w_{ij} + \max \bigg(0,  \sum_{i \le j  \in \mathcal{A}_0}  w_{ij}-W \bigg) + \max \bigg(0,  \sum_{i \le j  \in \mathcal{A}_1}  w_{ij}-W \bigg)
\label{eq:bipartition}
\end{align}
decreases as a result of replacing  $(A_0 , A_1)$ with   $(A_0' , A_1')$. Consequently, we should design the cell-moving algorithm  such that its objective is to minimize   $\Delta_{COST}$. 

This challenge can be translated into a graph bipartitioning problem. Let $H$  be a weighted graph (self-loop possible) where each cell in $C$ is a vertex and an edge connects  cell $i$ and cell $j$ if they have transactions; the weight of edge $(i,j)$ is $w_{ij}$. A feasible solution is a partition of $H$ into two components. The first additive term of Eq. (\ref{eq:bipartition}) is the cut weight of this partition and the second and third additive terms represent a capacity-constrained quality for the partition. We derive an algorithm to compute the best partition  based on the Fiduccia-Mattheyses (FM) heuristic  \cite{Fiduccia:1982:LHI:800263.809204}. FM is  effective  for solving the classic graph min-cut bipartitioning problem whose objective is to minimize the cut weight while balancing the vertex weight. FM is fast (linear time in terms of the number of vertices) and simple (each local operation involves moving only one vertex  across the cut). Because our objective is different (minimizing $\Delta_{COST}$), we need to modify FM. 

\begin{algorithm2e}[t]
 $A_0, A_1$: cellsets of $l_0$-server and $l_1$-server, respectively\;
Compute $W(A_0),W(A_1)$: edge weight sum (compute load) of  $A_0$ and  $A_1$, respectively\;
\While {true} {
	$k^*=0$, 	$GAIN_{max}=0$, 	$GAIN=0$\;
	Unlock all vertices\;
	Compute gain for every vertex\;
	Save the current partition $C=A_0 \cup A_1 $\;
	\For{\texttt{$k=1, 2, ..., |C|$}} {
     	   	\If {$\exists$ an unlocked vertex $i_k \in C$ of highest gain  satisfying InEq. (\ref{eq:maxloadcondition}) } {
			  $GAIN ~+=~ gain(i_k)$\;
        			\If {$GAIN > GAIN_{max}$} {
        				$GAIN_{max} = GAIN$\;
        				$k^* = k$\;
        			}	
			Move vertex $i_k$  to the other component\;
		 	Update gain for every unlocked vertex\;
			Update $W(A_0)$ and $W(A_1)$\;	
			Lock vertex $i_k$\; 
		}
		\lElse {
			\textbf{break}
		}
       		
        }
        \If {$GAIN_{max} > 0$} {
        		Retrieve the original  current partition   $ C=A_0 \cup A_1 $\;
        		Move vertices $i_1, i_2, ..., i_{k^*}$ each from its respective original component to the other component\;
		Update gain for every unlocked vertex\;
		Update $W(A_0)$ and $W(A_1)$\;
	 }
        \lElse {
	        \textbf{break}
        }
}
\Return\;
\caption{$\mathsf{move\_cells}(l_0,l_1)$} 
\label{alg:psi}
\end{algorithm2e}

The cell-moving algorithm works as follows (see Algorithm \ref{alg:psi}). Resembling FM, the algorithm runs in passes and in each pass we compute a sequence of cell migrations each  moving a vertex from $A_0$ to $A_1$   or from $A_1$ to $A_0$ such that $\Delta_{COST}(A_0,A_1)$ after this series is maximally improved. The algorithm stops when no improvement can be made. To determine which vertex to move, let us define for each vertex a quantity called   ``gain", which is the cost reduction if the vertex were moved from its component to the other component.  Consider a vertex $i$ and, without loss of generality, suppose that $i \in A_0$. If vertex $i$ were moved to $A_1$, its gain in the cut weight is  
\[
gain_{cut}(i) =  \sum_{j \in A_1} w_{ij} -  \sum_{j \in A_0, j\neq i} w_{ij}.
 \]
The   edge weight sum of $A_0$ and that of $A_1$ would be changed to
 \begin{align*}
 W'(A_0) & = W(A_0) -  \sum_{j \in A_0} w_{ij} \\
 W'(A_1) &= W(A_1) + w_{ii} + \sum_{j \in A_1} w_{ij}
 \end{align*}
 and so the gain in the capacity-constrained quality is
 \begin{align*}
 gain_{cap}(i)=\max(0, W(A_0)-W) + \max(0, W(A_1)-W)\\
  - \max(0, W'(A_0)-W) - \max(0, W'(A_1)-W).
 \end{align*}
 The gain of vertex $i$ is
 \[
 gain(i) = gain_{cut}(i)+gain_{cap}(i).
 \]
Intuitively, a positive (negative) gain would result in a smaller (larger) $\Delta_{COST}$ if the vertex switched its component.
   
    At the beginning of each pass, we construct a priority queue of vertices based on their gain. This queue includes only those vertices  whose migration would result in
 \begin{align}
 \max(W'(A_0), W'(A_1)) \le  \max(W,W(A_0), W(A_1)).
 \label{eq:maxloadcondition}
 \end{align}
 In other words, we  consider moving a vertex only if the moving improves the load balancing between the two servers or keeps them under the server capacity.
 
During the current pass, we repeatedly select the vertex of highest gain from the priority queue,  move it, and update the queue. After this vertex is moved, it is ``locked"  so that it cannot be moved again in the current pass. Then, we repeat this process until the queue is empty. We keep track of the gain accumulation   after each $k^{th}$ step:
\[
GAIN_k = \sum_{t=1}^k gain(i_t),
\]
where $i_t$ is the vertex chosen in step $1 \le t \le k$.
The best move decision  would be to move vertices $i_1, i_2, ..., i_{k^*}$ such that $k^* = \arg \max_{k \le N_\mathcal{C}} GAIN_k$. 

If  $GAIN_{k^*} > 0$, these moves would result in better $\Delta_{COST}$ because the new $\Delta_{COST}$ is  $\Delta_{COST} - GAIN_{k^*} < \Delta_{COST}$; we make these moves permanent  and   go on to the next pass which repeats the same procedure. Else, the   algorithm makes no change (i.e., keep the same partition as that before the pass starts) and stops.  

When a local operation $\mathsf{move\_cells}(l_0,l_1)$ finishes, we will use the final assignment resulted from this operation. The algorithm continues repeatedly with finding the next $\mathsf{move\_cells}()$ local operation that can further improve $COST$ and stops when  no such local operation is found.

\subsection{ Spread-Only Optimization}
To minimize $SPREAD$ can be reduced to solving  a  k-median  problem \cite{doi:10.1137/0137041}.
In k-median, given a set of $n$ clients and a set of $m$ facilities, the goal is to choose $k$ facilities to open and assign an open facility to each client such that the total assignment cost is minimum, assuming that  the assignment cost to service client $i$ by facility $l$ is $\gamma_{il}$ (by default, a metric).  
We can consider each server a facility to open ($k= N_\mathcal{S}$, $m  = N_\mathcal{L}$) and each cell a client ($n= N_\mathcal{C}$). The cost to assign client $i$ to facility $l$ (if open) is  $\gamma_{il} = w_id_{il}$ (alternatively, we can think of $d_{il}$ as the assignment cost per unit of service and $w_i$ as the service demand).  Then  the total service  cost of the corresponding k-median problem is 
\[
\sum_{l \in \mathcal{L}: ~open} \sum_{i \in \mathcal{C}} z_{il} \gamma_{il}
=
\sum_{l \in \mathcal{L}}\sum_{i \in \mathcal{C}} z_{il} d_{il}  w_{i}
= SPREAD.
\]
Therefore,  any   k-median solution $z_{il}$  ($z_{il}=1$ iff facility $l$ is open and client $i$ is served by this facility) is   a solution that minimizes $SPREAD$ for our problem.  

K-median is NP-hard \cite{doi:10.1137/0137041} and the best  approximation factor known to date is $2.675+\epsilon$, achieved by Byrka et al. \cite{Byrka:2017:IAK:3040971.2981561}. Using the local search approach,  one can obtain an approximation factor of $(3+2/p)$, for example by Arya et al.'s polynomial-time algorithm  \cite{Arya:2001:LSH:380752.380755}. This algorithm starts with a feasible assignment and repeatedly perform a $p$-facility swap until no further cost reduction.

Similarly, our algorithm starts with a random assignment  and then repeatedly applies a series of local operations. Let $\mathsf{assign\_cells}(L)$ denote an algorithm that assigns the cells in $\mathcal{C}$ to the servers located at a given subset of locations, $L \subset \mathcal{L}$, such that a cell is always assigned to the nearest server; i.e.,  
\[
\mathsf{assign\_cells}(L): i \mapsto \arg \min_{l \in L} d_{il}.
\]
A local operation, denoted by $\mathsf{swap\_locations}(l_0, l_1)$,   involves a pair of a server location $l_0$ in the current server location set $L$ and a non-server location $l_1 \not \in L$,  and does the following:
\begin{itemize}
\item Remove location $l_0$ from the server set
\item Add location $l_1$ as a new server location set 
\item Run  $\mathsf{assign\_cells}(L')$ to obtain a new cell-server assignment where $L' =  L-\{l_0\}+\{l_1\}$ is the new server set. 
\end{itemize}
A local operation is chosen to take place permanently if  $SPREAD$ of the resultant assignment is improved  by at least a constant factor $\kappa \in (0,1)$; i.e.,
\begin{align}
SPREAD_{new} < (1-\kappa) \times SPREAD.
\label{eq:facilityswap}
\end{align}
Subsequently, the algorithm goes on repeatedly with finding another local operation satisfying this inequality until none is found.
\subsection{Three-Phase Algorithm}
The above algorithms are designed for  only one objective, cost or spread.  We propose the following three-phase algorithm; a summary  is given in Algorithm \ref{alg_threephase}.

In Phase 1 (lines 1-4 of Algorithm \ref{alg_threephase}), we run the spread-only algorithm presented above to obtain an assignment with the (approximately) best spread. 

In Phase 2 (lines 5-7 of Algorithm \ref{alg_threephase}), we start with this  assignment and adjust the cell-server assignment to improve cost. During the process, the server-location assignment is intact. For the adjustment, we apply the same local search algorithm (same local operation) as in the cost-only algorithm except for one small modification.  Specifically, a local operation,  $\mathsf{move\_cells}(l_0, l_1)$, is made permanent not only if the resultant cost is less (Eq. (\ref{eq:movecells1}), but also the resulted spread remains below a threshold,
\begin{align}
SPREAD_{new} \le (1+\epsilon) \times SPREAD_{0},
\label{eq:movecells2}
\end{align}
Because a local operation, while lessening the cost may worsen the spread, the  threshold is introduced to keep the spread within a reasonable factor of  $SPREAD_{0}$ that is the spread at the start of  Phase 2. Here, $\epsilon \in (0, \infty)$; we can set $\epsilon=\infty$ if the goal is to bring down the cost   aggressively.

In Phase 3 (lines 8-10 of Algorithm \ref{alg_threephase}), we start with the  assignment of Phase 2  and recompute the server locations  for better spread (which has been worsen as tradeoff during Phase 2 compared to that in Phase 1). During the process, the cell-server assignment is intact. Denote the server set by $\mathcal{S}$ and the cellset of  each server $s \in \mathcal{S}$   by $cellset(s)$. The unknown to compute is the binary variable  $y_{sl}$, set to $1$ iff server $s$ is placed at location $l$. We have 
\begin{align}
SPREAD =  \sum_{l \in \mathcal{L}}\sum_{i \in \mathcal{C}} z_{il} d_{il} w_i 
= \sum_{s\in \mathcal{S}} \sum_{l \in \mathcal{L}}y_{sl}\underbrace{\sum_{i \in cellset(s)}  d_{il} w_i}_{A_{sl}}.
\label{eq:hungarian}
\end{align}
Because a server must be assigned to an exclusive location,  to minimize $\sum_{s \in \mathcal{S}}\sum_{l \in \mathcal{L}}y_{sl} A_{sl}$ is equivalent to finding a min-cost maximal matching in a complete bipartite graph $(\mathcal{S} , \mathcal{L})$ where an edge connects a vertex $s \in \mathcal{S}$ to a vertex $l \in \mathcal{L}$ with cost $A_{sl}$ (which is known from the intact cell-server assignment). Therefore, we apply the Hungarian Algorithm \cite{Kuhn55thehungarian} to compute this matching ($y_{sl}$), which runs in polynomial time (cubic in the number of vertices).

\begin{algorithm2e}[t]
\tcc{Phase 1: K-median to minimize $SPREAD$}
Initial assignment: Choose a set $L$ of  $N_\mathcal{S}$ random locations for the servers and
assign cells to these servers randomly\;
\While {$\exists$  $l_0 \in L$ and $\exists ~l_1 \not \in L$ such that    $\mathsf{swap\_locations}(l_0, l_1)$  would result in an assignment satisfying InEq. (\ref{eq:facilityswap}) } {
Permanently apply the assignment resulted from $\mathsf{swap\_locations}(l_0, l_1)$\;
}
\tcc{Phase 2:  FM to improve $COST$}
\While {$\exists$  $l_0,l_1 \in L$   such that     $\mathsf{move\_cells}(l_0, l_1)$  would result in   an assignment satisfying InEq. (\ref{eq:movecells1}) and InEq. (\ref{eq:movecells2})} {
Permanently apply the assignment resulted from  $\mathsf{move\_cells}(l_0,l_1)$ \; 
}
\tcc{Phase 3:  Hungarian to improve $SPREAD$}
Let $\mathcal{S}$ be the set of servers (i.e., those serving cells according to the above assignment)\;
Compute matrix $[A_{sl}]_{\mathcal{S}\times \mathcal{L}}$ as defined in Eq. (\ref{eq:hungarian}) \;
Run Hungarian Algorithm on the cost matrix $[A_{sl}]_{\mathcal{S}\times \mathcal{L}}$ \;
\Return\;
\caption{Three-Phase Algorithm} 
\label{alg_threephase}
\end{algorithm2e} 

\section{Evaluation\label{sec:evaluation}}
We conducted an evaluation in two scenarios: using a synthetic dataset (Synthetic500) to represent a workload that has   no relationship with geography and a real-world dataset (Milano625) to represent a workload in which demand is higher between cells of increasing proximity. 
\begin{itemize}
\item Synthetic500:  The service area is a 2D square area $\mathcal{A} = [0,1]^2$ where $N_\mathcal{C}=500$ random locations are chosen for the base-stations and their corresponding cells  are the Voronoi cells  of  $\mathcal{A}$. The workload demand $w_{ij}$ between cell $i$ and cell $j$ is generated uniformly at random: $w_{ij} \sim uniform(0,1)$.
\item Milano625:  We constructed this dataset from the collection of geo-referenced Call Detail Records over the city of Milan during Nov 1st, 2013 - Jan 1st, 2014 (https://dandelion.eu/). Specifically, we partition the area into a grid of $25\times 25=625$ cells of size $0.94km \times 0.94km$,  and count the calls between these cells   made during the Monday of Nov 4th, 2013.
\end{itemize}
 
\begin{figure}[t]
\centering
\includegraphics[width=0.528\textwidth]{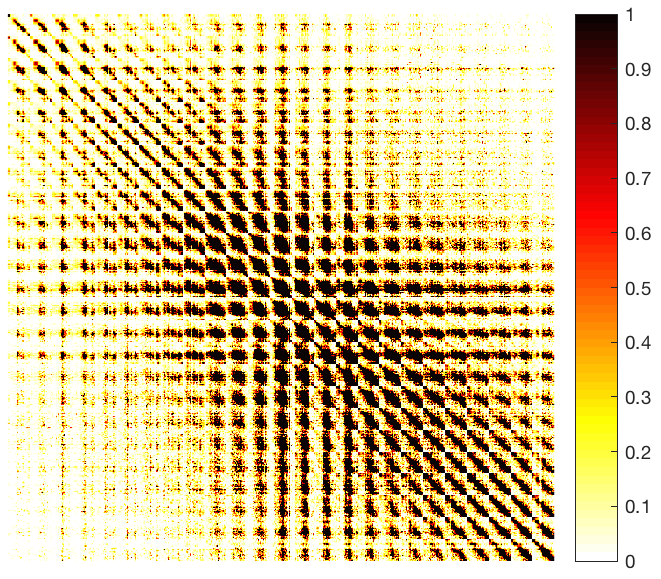}\label{fig:milano_heatmap}
\caption{Heat map of workload  $w_{ij}$ for   Milano625 , where $i$ is the x-axis, $j$ the y-axis, and $(0,0)$ at top-left.}
\end{figure}

In both studies, the transaction demand values are normalized such that they sum to 1. The number of  MEC servers  is set to $N_\mathcal{S} =10$ whose location is chosen from $N_\mathcal{L}=50$ random locations. These quantities are reasonable given the number of cells. The server capacity is set to $W\in \{0.03, 0.04, ..., 0.08\}$, meaning \{3\%, 4\%..., 8\%\} of the total workload. Figure \ref{fig:milano_heatmap} shows the heat map of the workload demand for the Milano625 dataset. 

For convenience, we refer to our three-phase algorithm as $\mathsf{KMED/FM/HUNG}$, as it applies k-median, Fiduccia-Mattheyses (FM), and  Hungarian algorithms in the three phases, respectively. Serving as benchmark  for comparison are: $\mathsf{RAND}$ (the random assignment algorithm), $\mathsf{KMED}$ (the spread-only algorithm using k-median), and $\mathsf{FM/HUNG}$ (the cost-only algorithm using FM with another step using Hungarian to improve spread). Note that the only difference between $\mathsf{KMED/FM/HUNG}$ and $\mathsf{FM/HUNG}$ is that the former starts with a    $\mathsf{KMED}$ assignment while the latter starts with a random assignment.  The metrics for comparison are cost ($COST$) and spread ($SPREAD$). $\mathsf{RAND}$ offers a good upper-bound for both cost and spread, while $\mathsf{KMED}$ represents a good lower bound (supposedly best) for spread and $\mathsf{FM/HUNG}$  a good lower-bound (supposedly best) for cost. 

The $\kappa$ parameter in Eq. (\ref{eq:facilityswap}) is set to 0.0001 for k-median and $\epsilon$ in Eq. (\ref{eq:movecells2}) is set to $\infty$ (no spread constraint  in the second phase).  The simulation runs on 10 random sets of candidate server locations and, for each set, 5 random choices for the initial assignment. The results are averaged over these 50 runs and plotted with 100\% confidence interval.

\begin{figure}[t]
\begin{center}
\subfigure[]{\includegraphics[width=0.49\textwidth]{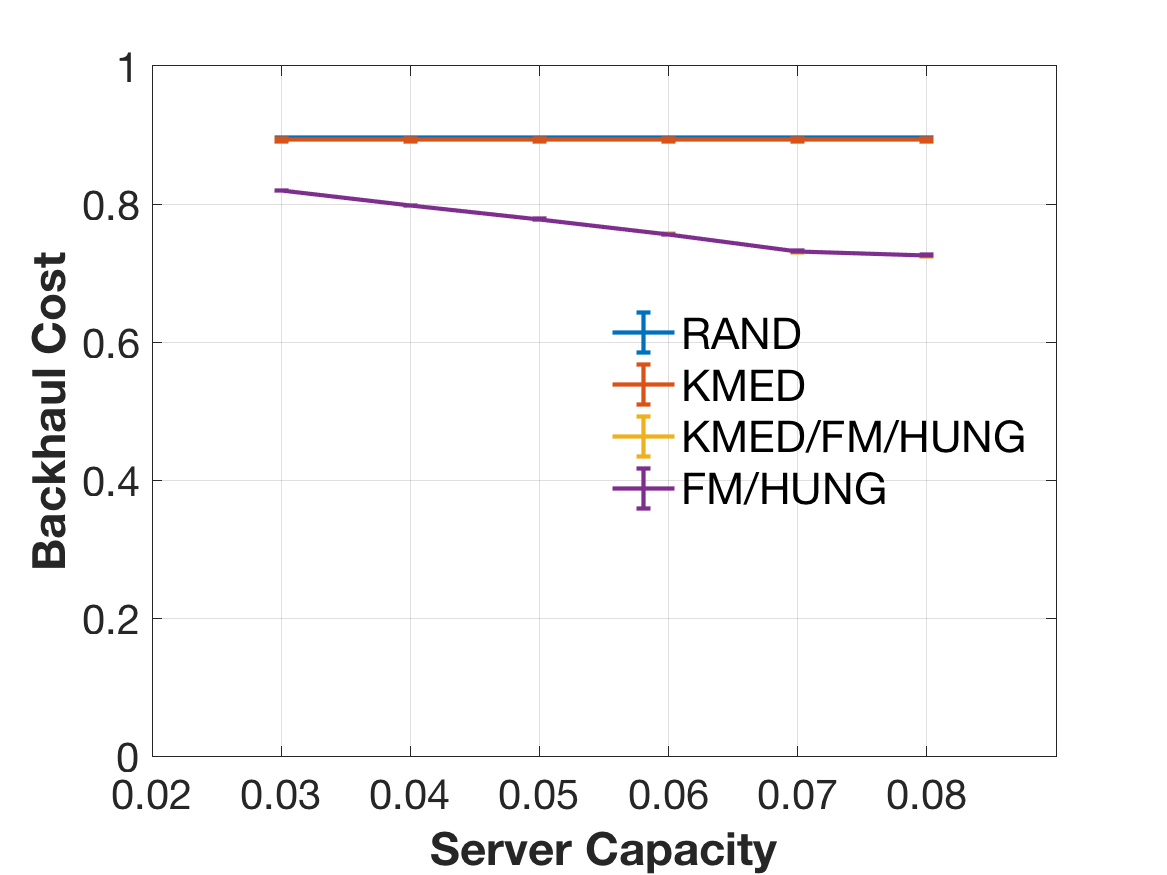}\label{fig:c500_L100_uniform_cost}}
\subfigure[]{\includegraphics[width=0.49\textwidth]{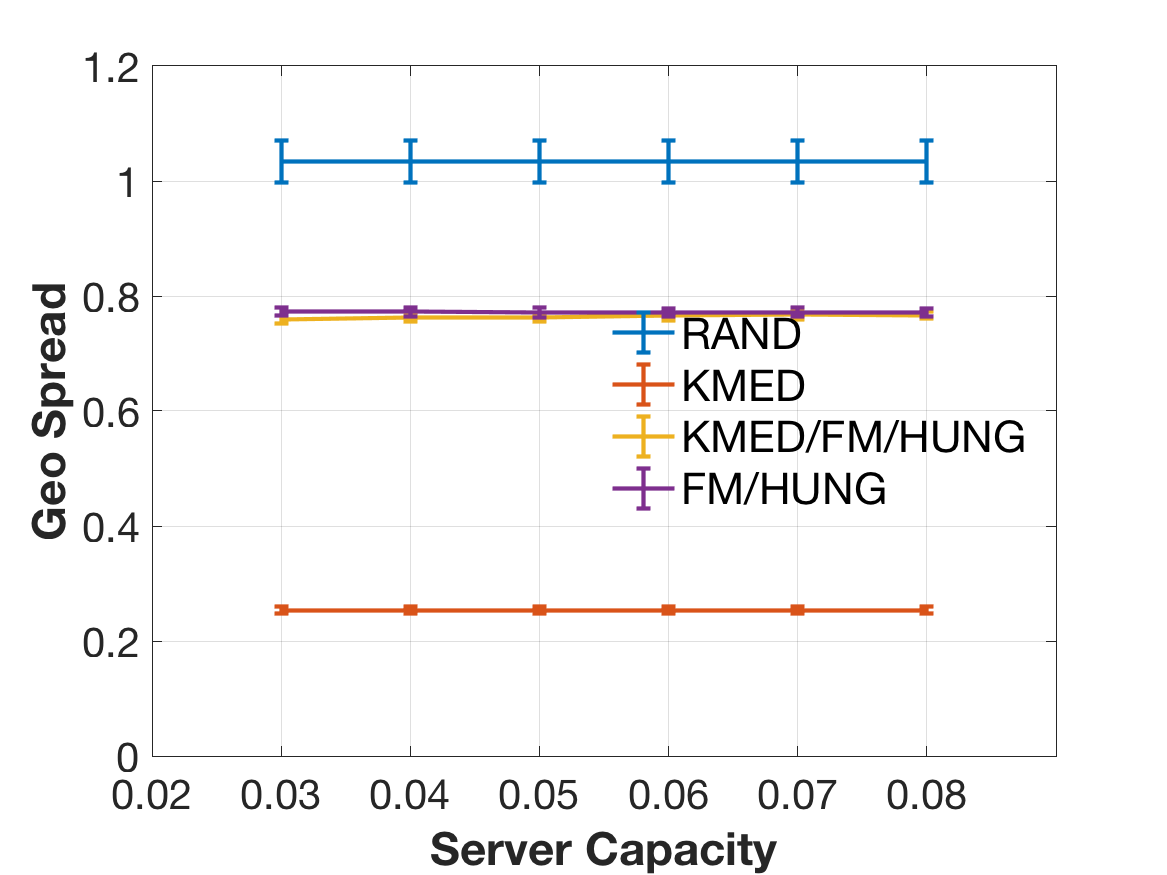}\label{fig:c500_L100_uniform_spread}}
\caption{Synthetic500: synthetic workload, no relationship with geography}
 \label{fig:c500_L100_uniform}
 \end{center}
\end{figure}
 
 \subsection{Workload Without Geography Correlation}
Figure \ref{fig:c500_L100_uniform} shows the results for the synthetic case in which geography is no factor in workload demand. In terms of cost (Figure \ref{fig:c500_L100_uniform_cost}), $\mathsf{KMED}$ is almost identical to $\mathsf{RAND}$, both incurring a high backhaul cost of almost 0.9 (i.e., 90\% of the total workload), even when the server capacity increases. This is not surprising because $\mathsf{KMED}$ is cost-blind and so when workload has no relationship with geography, minimizing spread results in a cost as bad as that of a random assignment. Perhaps for the same reason, the other two methods, $\mathsf{KMED/FM/HUNG}$ and $\mathsf{FM/HUNG}$,  also incur almost the same cost. In other words, whether we start with a $\mathsf{RAND}$ assignment or a $\mathsf{KMED}$ assignment, an application of FM+HUNG would result in similar costs. It is important to note that the FM step is effective, especially as the server capacity increases. With a server capacity of 0.08,  applying FM reduces the  backhaul cost to 0.73, a 20\% improvement from the initial assignment. 

In terms of spread (Figure \ref{fig:c500_L100_uniform_spread}), $\mathsf{KMED}$ is the best (expected) and $\mathsf{RAND}$   the worst (understandable because it is spread-blind). Between the other two,  more interestingly, $\mathsf{FM/HUNG}$ has a similar spread (only slightly larger) compared to that of $\mathsf{KMED/FM/HUNG}$. 
This study suggests  that, for a workload input that has no relationship with geography, (1) we can do better than a random assignment, (2)   the 3-phase algorithm can run without Phase 1 ($\mathsf{KMED}$) which has almost zero benefit, and (3) there is no clear winner between $\mathsf{FM/HUNG}$ and $\mathsf{KMED}$; which one should be chosen depends on whether we prefer minimizing cost or spread.

 \subsection{Workload With Geography Correlation}
\begin{figure}[t]
\begin{center}
\subfigure[]{\includegraphics[width=0.49\textwidth]{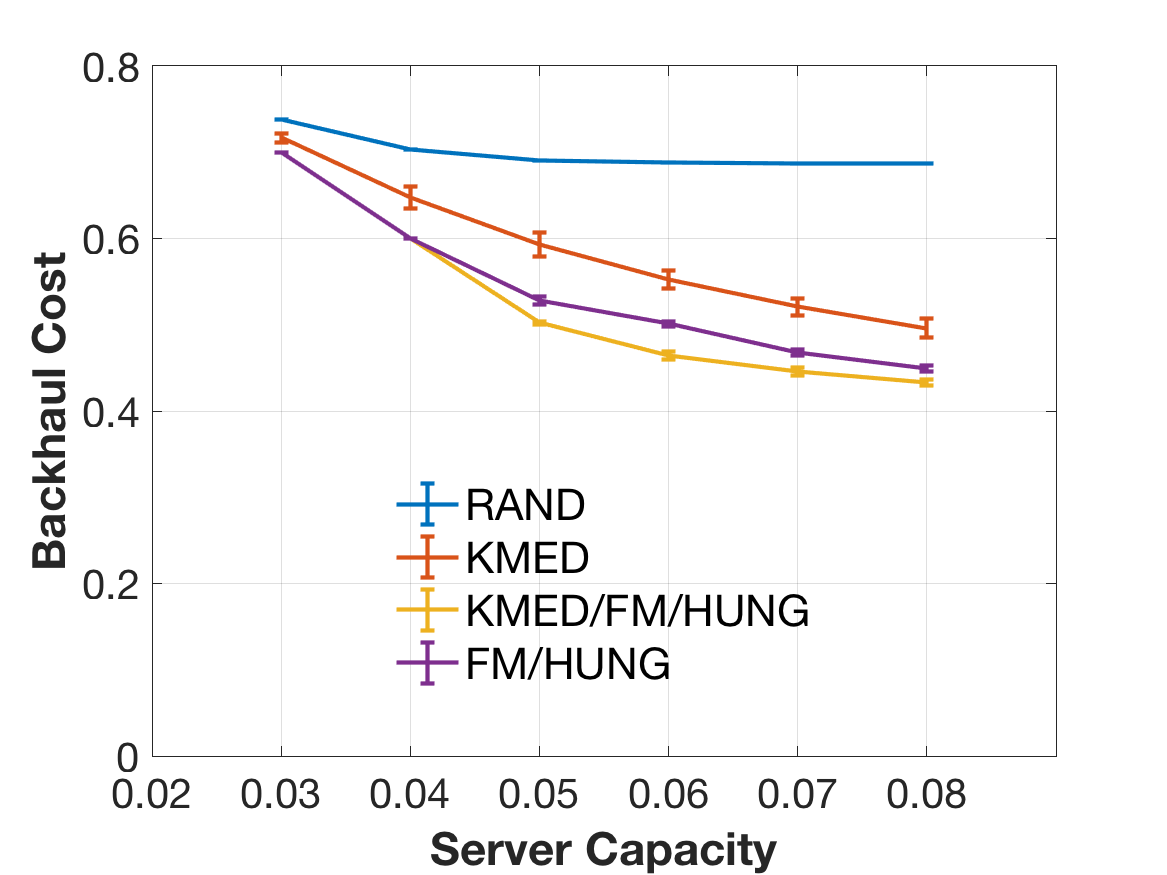}\label{fig:milano20131104_c625_L100_cost}}
\subfigure[]{\includegraphics[width=0.49\textwidth]{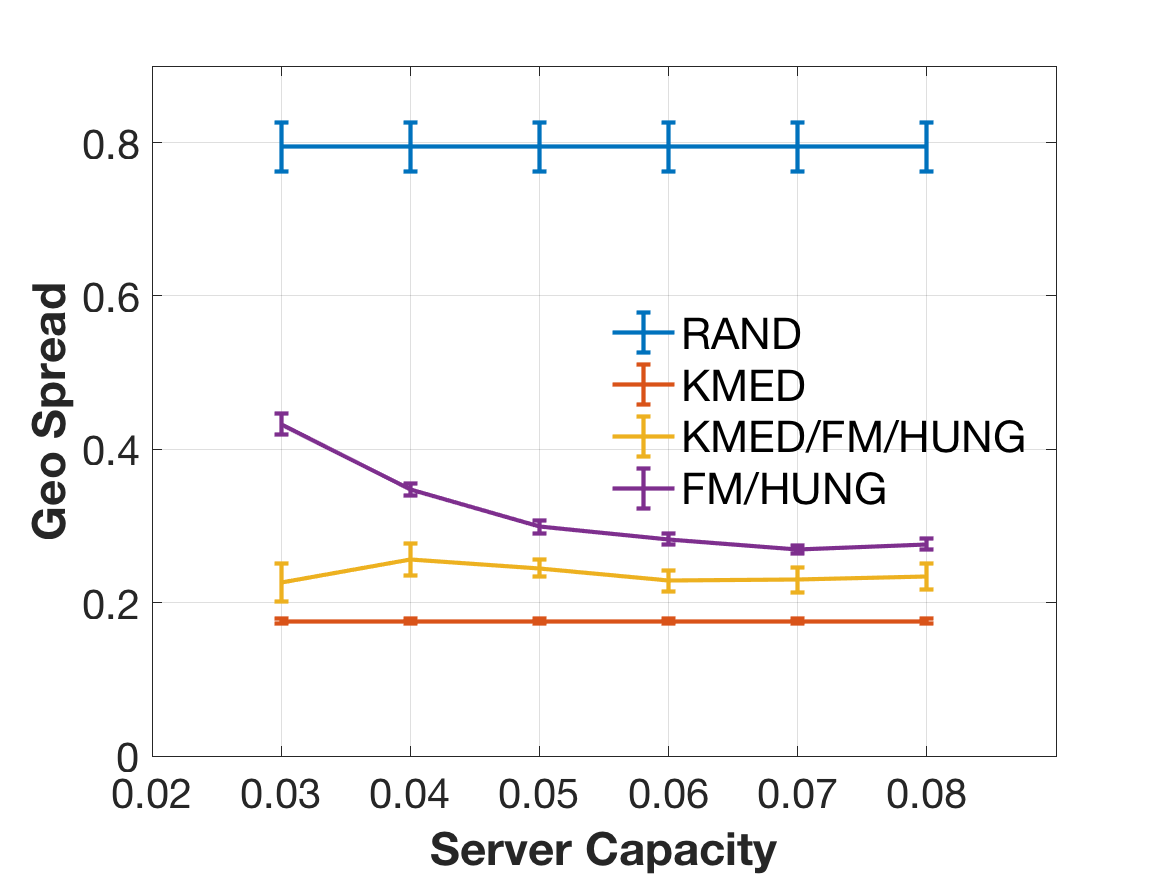}\label{fig:milano20131104_c625_L100_spread}}
\caption{Milano625: real-world workload, strongly correlated to geography.}
 \label{fig:milano20131104_c625_L100}
 \end{center}
\end{figure}
 
Figure \ref{fig:milano20131104_c625_L100} shows the results for the real-world dataset (Milano625), in which the workload has a strong correlation with geography; specifically, higher between cells of shorter distance \cite{Bouet:2017:GMR:3098208.3098216}. Similar to the above study, and, expectedly,  $\mathsf{RAND}$ is  worse than all the other algorithms in both objectives and $\mathsf{KMED}$ has the best spread. There are, however, key differences. 

First,  $\mathsf{KMED}$  has a substantially lower cost than $\mathsf{RAND}$'s; this implies that, due to the correlation between workload and geography, by minimizing spread there is, to some extent, a benefit in reducing the cost. Indeed, because $\mathsf{KMED}$ tends to cluster cells near each other  and workload demand is high between close cells, heavy workloads tend to be served by the edge, hence less workload going backhaul (compared to a random assignment). 
 Second, $\mathsf{KMED/FM/HUNG}$ is clearly better than $\mathsf{FM/HUNG}$ in both objectives; this substantiates the effectiveness of having Phase 1 ($\mathsf{KMED}$) in our  algorithm,  leading to not only better cost but also better spread. 
  Third, $\mathsf{KMED/FM/HUNG}$ has a spread only slightly worse than $\mathsf{KMED}$; this shows the effectiveness of Phase 3 ($\mathsf{HUNG}$) in improving spread. In short, all the three phases in the proposed algorithm ($\mathsf{KMED/FM/HUNG}$) are important to achieving both objectives. 

\subsection{Other Observations}
Figure \ref{fig:Milano625Visual} gives a visual representation of the assignment map according to $\mathsf{KMED}$ and $\mathsf{KMED/FM/HUNG}$.   Both methods are consistent with the physical map of Milan (Figure \ref{fig:milanomap}); that is, because most transactions involve the inner neighborhoods, a server closer to the the central area covers fewer cells (which have high activity) than those in the outskirt (which have low activity). While $\mathsf{KMED}$ places the servers spatially nicely (Figure \ref{fig:kmedmap}, Figure \ref{fig:kmedmap1}), $\mathsf{KMED/FM/HUNG}$ allows for some dis-contiguity in the cells that belong to the same server (Figure \ref{fig:kfhmap}, Figure \ref{fig:kfhmap1}); the latter does so to reduce the backhaul cost. For example, when the capacity $W=0.05$, $(COST, SPREAD)$ is (0.43, 0.21) for $\mathsf{KMED/FM/HUNG}$ and (0.49, 0.18) for $\mathsf{KMED}$. This is a tradeoff between cost versus spread. Although we cannot avoid this tradeoff, it is important to point out that  $\mathsf{KMED/FM/HUNG}$ offers a better workload balance, as clearly illustrated in Figure \ref{fig:loadbalance}. The ratio of the maximum to the minimum workload demand is at least two times less with $\mathsf{KMED/FM/HUNG}$ than with $\mathsf{KMED}$. 

\begin{figure}[t]
\begin{center}
\subfigure[$\mathsf{KMED}$]{\includegraphics[width=0.49\textwidth]{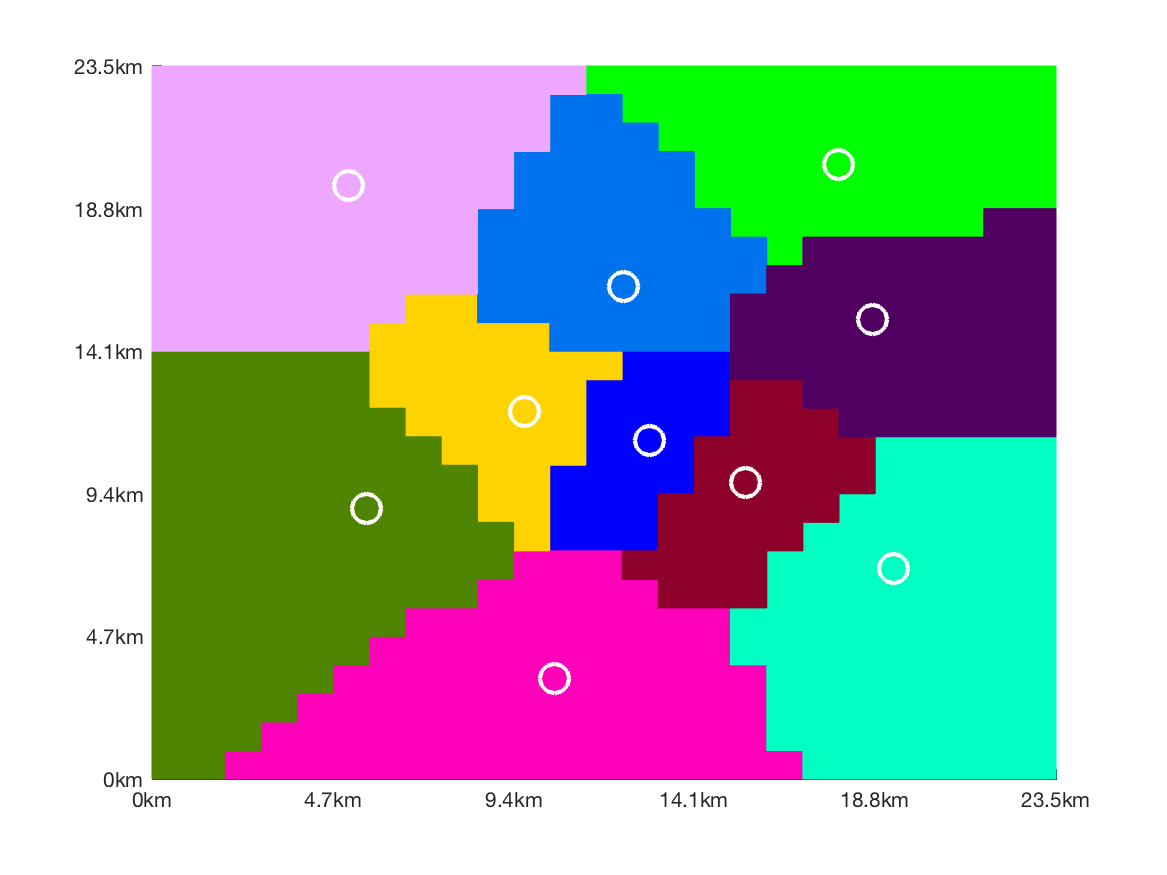}\label{fig:kmedmap}}
\subfigure[$\mathsf{KMED}$]{\includegraphics[width=0.49\textwidth]{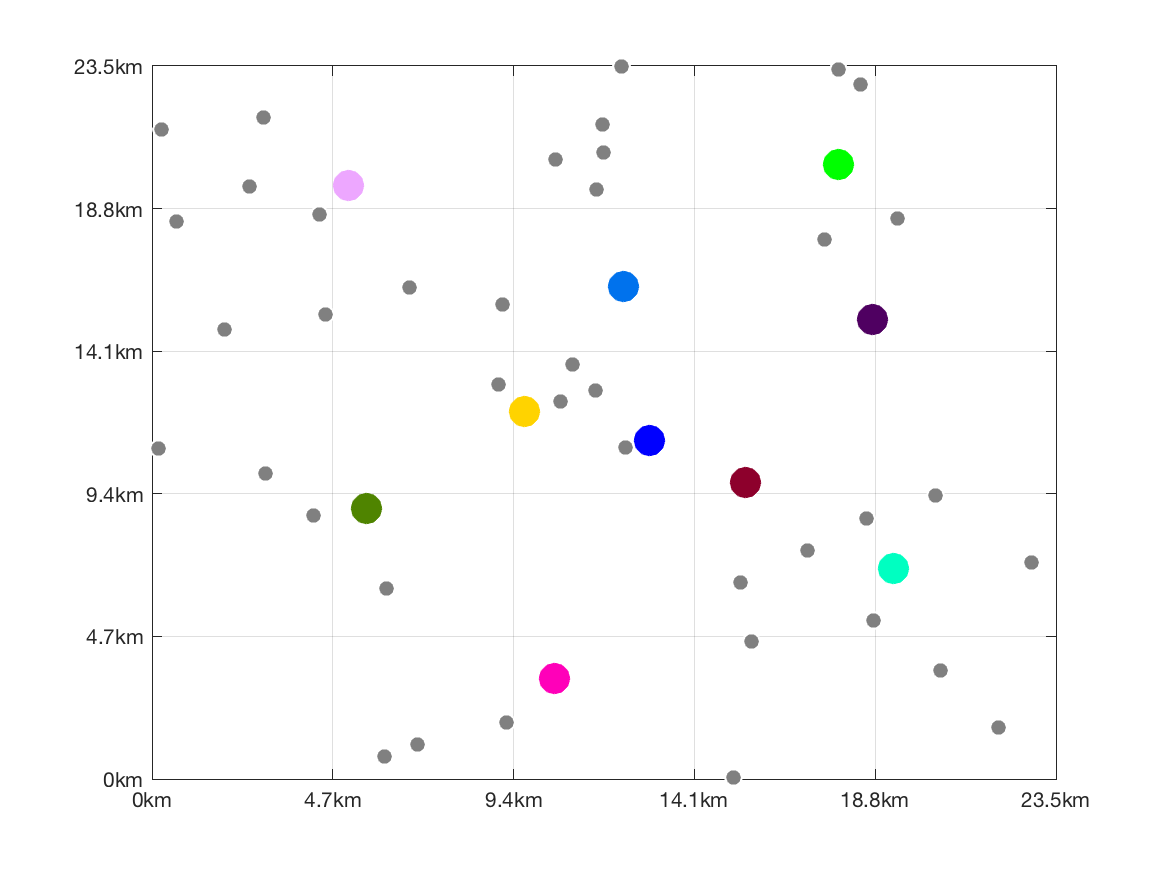}\label{fig:kmedmap1}}
\subfigure[$\mathsf{KMED/FM/HUNG}$]{\includegraphics[width=0.49\textwidth]{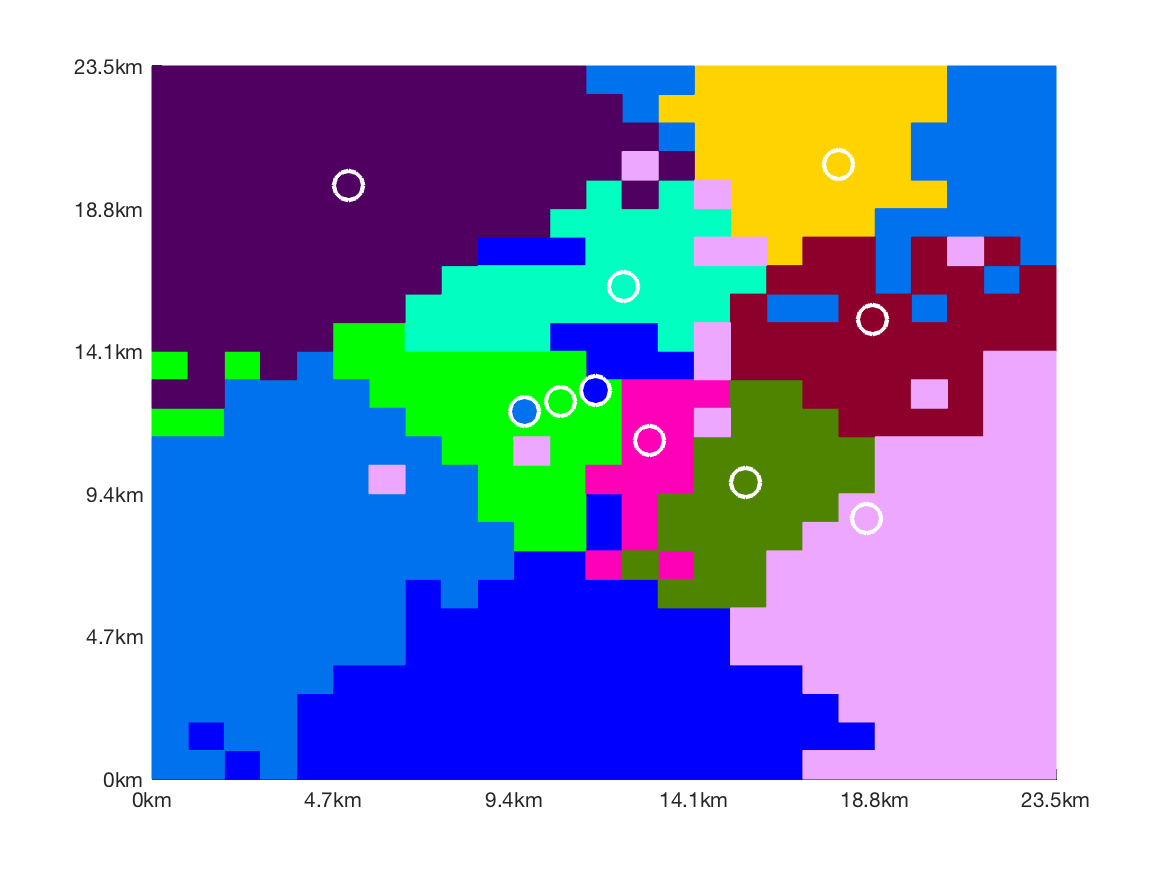}\label{fig:kfhmap}}
\subfigure[$\mathsf{KMED/FM/HUNG}$]{\includegraphics[width=0.49\textwidth]{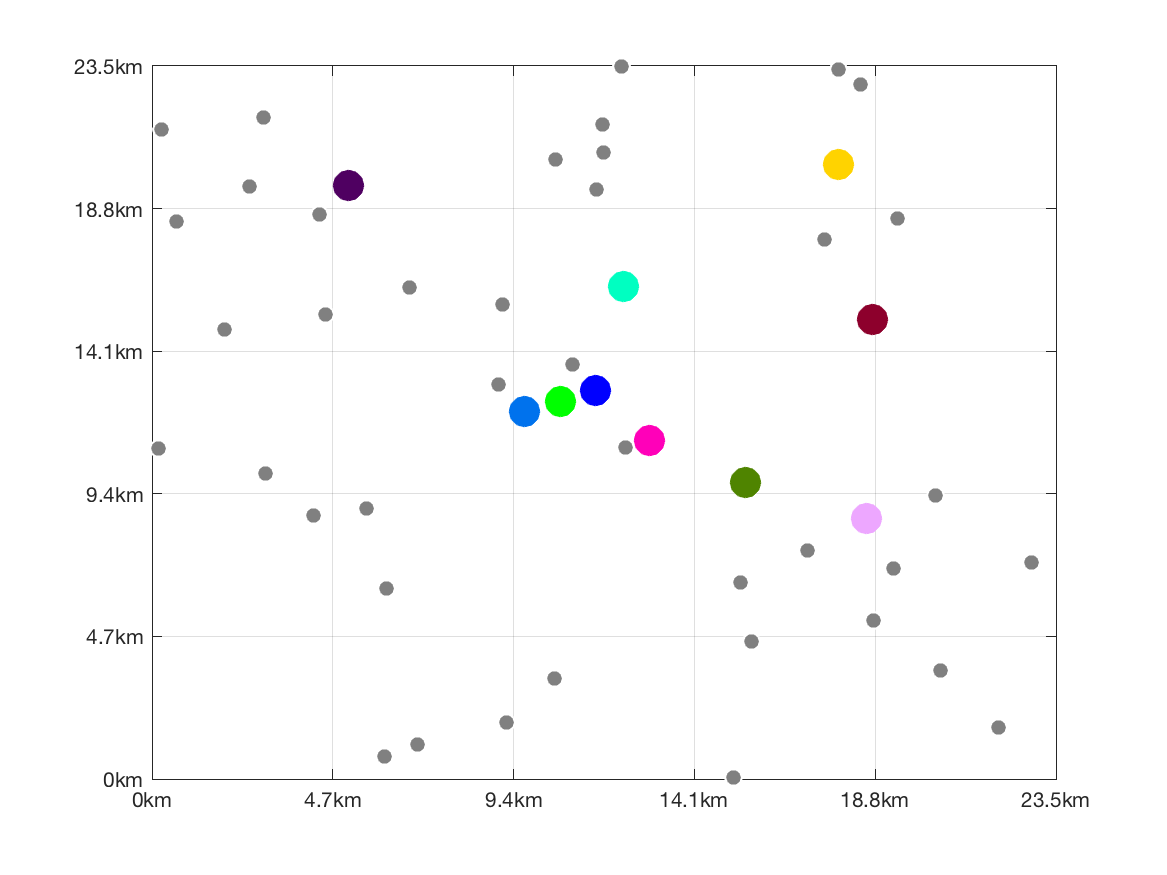}\label{fig:kfhmap1}}
\subfigure[$\mathsf{KMED/FM/HUNG}$]{\includegraphics[width=0.53\textwidth]{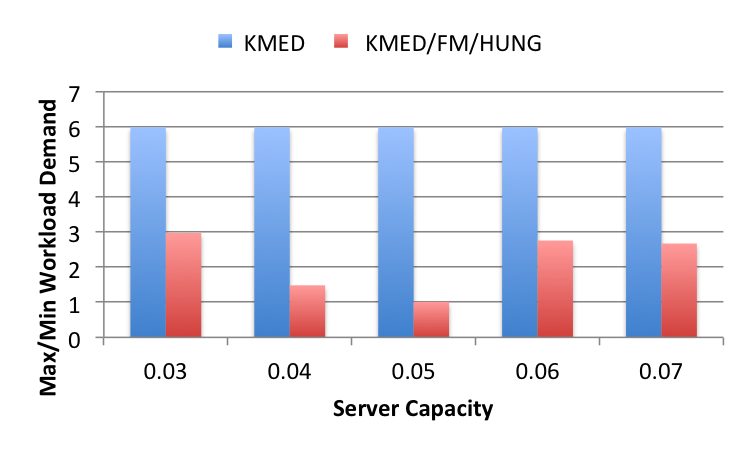}\label{fig:loadbalance}}
\subfigure[Milano area]{\includegraphics[width=0.38\textwidth,height=0.23\textheight]{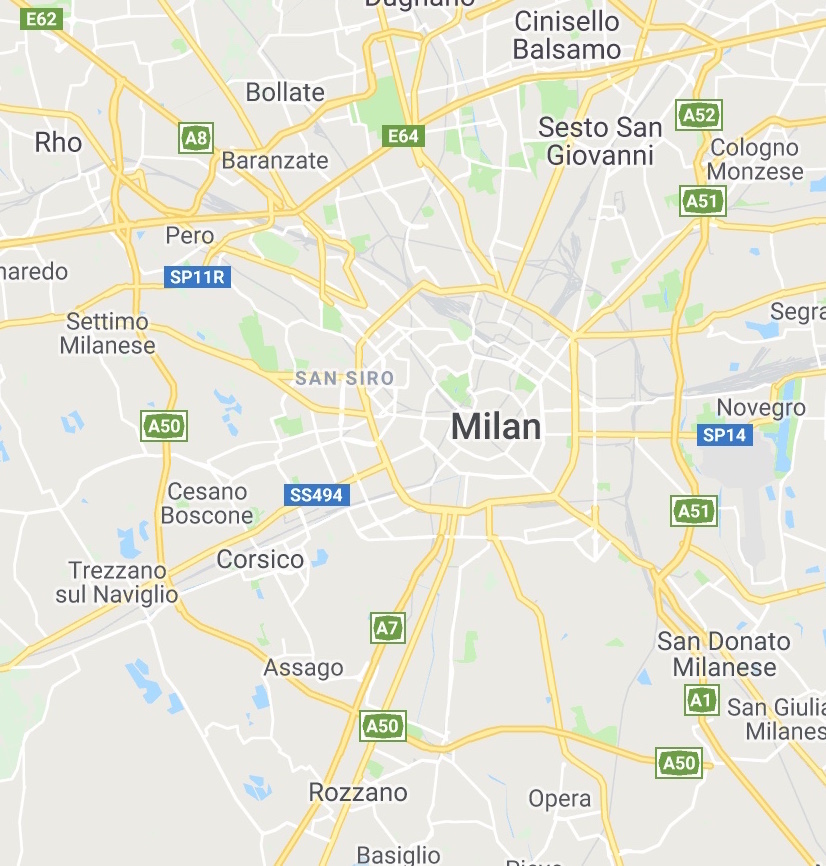}\label{fig:milanomap}}
\caption{Assignment map of (a,b) $\mathsf{KMED}$ and (c,d) $\mathsf{KMED/FM/HUNG}$ for a single random run on Milano625 with capacity $W=5\%$, where each colored-circle represents a server (out of 50 possible locations) and each server and its cells share the same color. Also shown is (e) the ratio of maximum to minimum workload and (f) the physical map of Milan.}
 \label{fig:Milano625Visual}
 \end{center}
\end{figure}

\section{Conclusions\label{sec:conclusions}}
We have addressed a new server assignment problem for MEC, whose decision making needs to be made for where geographically to place the servers and how to assign them to the user cells based on transactional workloads. The formulation of the two objectives with respect to the backhaul cost and geographic spread has not appeared in the MEC literature. We have proposed and evaluated a heuristic solution leveraging k-median, Fiduccia-Mattheyses, and Hungarian methods. The solution is not optimal (due to the NP-hardness of the problem), but an effective approximation. For the future work, our next step is to consider the case where the workload demand is not static. In practice, the workload demand varies over the time, but usually follows a pattern. Knowing this pattern, for example, in the form of a probability distribution, an interesting goal is to compute an assignment offering the best expected optimization.  

\bibliographystyle{IEEEtran}


\end{document}